\newcommand{\bmat}[1]{\left(\begin{array}{cc}#1\end{array}\right)}
\newcommand{\R}{{\mathbb R}}
\newcommand{\Z}{{\mathbb Z}}
\newcommand{\C}{{\mathbb C}}
\newcommand{\z}{{\zeta}}
\newcommand{\LL}{L^2(\R,dx)}
\newcommand{\HF}{\widehat{\mathcal H}_g}
\newcommand{\HH}{{\mathcal H}_g}
\newcommand{\F}{\widehat{\mathcal F}}
\DeclareMathOperator{\dA}{\square}
\DeclareMathOperator{\supp}{supp}
\DeclareMathOperator{\x}{\mathfrak{x}_n}
\DeclareMathOperator{\xx}{\mathfrak{x}}
\newtheorem{thm}{Theorem}[section]
\newtheorem{lm}[thm]{Lemma}
\newtheorem{prop}[thm]{Proposition}
\newtheorem{df}[thm]{Definition}
\theoremstyle{remark}
\newtheorem{remark}[thm]{Remark}
\theoremstyle{remark}
\begin{document}

\title{Non-self-adjointness of the  Klein-Gordon operator on a globally hyperbolic and geodesically complete manifold. An example.}
\author[1]{Wojciech Kamiński}
\affil[1]{Faculty of Physics, University of Warsaw, Pasteura 5 PL-02093 Warsaw, Poland}

\maketitle

\begin{abstract}
We describe a Lorentzian manifold, that is globally hyperbolic and geodesically complete, but such that the (minimally coupled) Klein-Gordon operator with the standard domain is not essentially self-adjoint.
\end{abstract}

\section{Introduction}

Essential self-adjointness of the Laplacian (with a domain consisting of compactly supported smooth functions) is a quantum counterpart of completeness of classical motion i.e. geodesical completeness of the manifold. The known result of \cite{Strichartz} shows that for geodesically complete Riemannian manifolds Laplacian is essentially self-adjoint  (quantum complete). On the other hand, if the manifold is a part of a bigger space (metric extends smoothly through a smooth boundary), then the Laplace operator is not essentially self-adjoint.  Although there is no one-to-one correspondence (there are non-complete manifolds with self-adjoint Laplace operator), we see  that if we work only in the classically complete manifold setup we do not need to worry about corresponding quantum completeness property.

In the Lorentzian signature situation is more complicated. The counterpart of the Laplacian is the d'Alembert operator $\dA=-\nabla_\mu\nabla^\mu$ (see Section \ref{sec:dA}). It is formally symmetric in the Hilbert space of square integrable function with respect to the  natural measure $\mu_g$ associated to the metric. However, there is no natural notion of completeness and there are several proposed definitions which correspond to different (equivalent for Riemannian manifolds) characterizations of this property: geodesic completeness, timelike Cauchy completeness and finite compactness \cite{Global-lor}. One can argue that global hyperbolicity (see \cite{Ellis-Hawking, Global-lor}) is also related \cite{Global-lor}. For globally hyperbolic manifolds all previously mentioned properties are implied by geodesic completeness \cite{Beem-fin}. The question is thus if the self-adjointness is ensured by this property. Positive answer seems possible in light of  the Riemannian case. Indeed, in many special and important situations one can show that the d'Alembert $\dA$ as well as Klein-Gordon operator $\dA+m^2$ are essentially self-adjoint \cite{Derezinski2016, Vasy2017}. 

Let us notice, that self-adjointness would be very useful. First of all, it would legitimize formulas like $e^{is\dA}$ which sometimes appear in the literature about QFT in curved spacetimes (see for example the classical textbooks in the field \cite{Birrell} Section 6.1 and \cite{Parker2009} Section 3.4) although usually at formal level.\footnote{It is important to stress that such operators are not necessary in the theory of quantum fields on a curved background and they serve for derivation of so-called Hadamard parametrix, that can be obtained also by other completely rigorous methods (see \cite{Baer}). More importantly even if they exist, they might not be directly related to the Hadamard expansion. 
For example the kernel $e^{is\dA}$ on the circle $S^1$ is given by the distributional limit of the Jacobi's theta function, whereas all Hadamard coefficients are the same as on the real line (thus, for $n>0$ they are vanishing).} Secondly, with an additional assumption on the spectrum, by the limit of the inverses $(\dA+m^2\pm i\epsilon)^{-1}$ one should obtain distinguished Feynman and anti-Feynman propagators \cite{Derezinski2016}.

However, the answer is in the negative for the Lorentzian case. In this work, we will present an example of a Lorentzian spacetime (we will denote it by $M_V$) satisfying the following four conditions
\begin{enumerate}
\item it is diffeomorphic to $\R^4$,
\item it is geodesically complete,
\item it is globally hyperbolic,
\item the d'Alembert operator $\dA$, with the domain $C_0^\infty(M_V)$ of smooth, compactly supported functions, is not essentially self-adjoint in the Hilbert space $L^2(M_V,\mu_g)$  (see Section \ref{sec:dA-1} for the notation).
\end{enumerate}
Our example will be based on the example presented in \cite{Reed-Simon-2} concerning  similar question: whether self-adjointness follows from completeness of classical dynamics (for one dimensional particle moving in a potential).

\section{The example metric}\label{sec:metric}

In this article, we consider a spacetime $M_{V}$, that is a manifold $\R^4$ with a metric $g_{\mu\nu}$ of the form (signature $(-+++)$)\footnote{Although this metric has a form of the known $pp-$waves, it does not satisfy Einstein's equations with reasonable energy conditions. We thank D. Siemssen for pointing this fact to us.}
\begin{equation}
ds_g^2=-V(x)d\eta^2+2d\eta d\z+dx^2+dy^2,
\end{equation}
where $V(x)=-x^4+\sum_{n=1}^\infty (\sigma_n(x)+\sigma_n(-x))$ and $\eta,\z,x,y$ are  coordinates.

The functions $\sigma_n$ are assumed to have the following properties:
\begin{enumerate}
\item $\sigma_n$ is smooth and $\sigma_n\geq 0$,
\item $\supp \sigma_n\subset[\x,\x+\epsilon_n]$ where 
\begin{equation}\label{def-rn}
\x=\frac{1}{2}(n+1)+\frac{3}{2}\sqrt{n+1}
\end{equation}
and $0<\epsilon_n<\frac{1}{2}$ (such that $\x+\epsilon_n<\xx_{n+1}$ and $\sigma_n$ have disjoint supports) satisfy
\begin{equation}\label{def-epsn}
\sum_{n=1}^\infty \epsilon_n n^2<\infty,
\end{equation}
\item $\sup_{x\in\supp\sigma_n} \sigma_n(x)-x^4=n+1$.
\end{enumerate}

The functions $\sigma_n$ are basically the spikes introduced in \cite{Reed-Simon-2} (Example 2 on page 157). One can show (see also \ref{sec:sa-reduced}) that an operator on the real line $H=-\partial_x^2+p_{\z}^2 V(x)$ is not essentially self-adjoint on the standard domain $C_0^\infty(\R)\subset \LL$ if $p_{\z}\not=0$. Classical trajectory $x(s)$ for the particle in the potential without spikes, that is $V_0=-p_{\z}^2x^4$,  would reach plus or minus infinity in a finite time, so the trajectory cannot be extended indefinitely. On the other hand, in our case, classical motion is confined ($x(t)$ is bounded), because spikes are higher and higher and form a sort of barriers. This is the reason why maximal open interval of extendibility of a solution $t\rightarrow x(t)$ is the whole real line (we will say that the trajectory can be extended indefinitely).
We will use this property in our example.

For clarity of the presentation, we will now collect some basic properties of this metric:

\begin{lm}\label{lm:gg}
The following holds:
\begin{enumerate}
\item $\det g_{\mu\nu}=-1$.
\item The only nonzero components of the inverse metric $g^{\mu\nu}$ are
\begin{equation}
g^{\eta\z}=g^{\z\eta}=1,\ g^{\z\z}=V,\ g^{xx}=1,\ g^{yy}=1.
\end{equation}
In particular $d\eta$ is null: $g^{-1}(d\eta,d\eta)=0$.
\end{enumerate}
\end{lm}

\begin{proof}
In the matrix notation (order of the columns and rows $\eta,\z,x,y$) the metric is equal to
\begin{equation}
\left(\begin{array}{cccc}
-V & 1& 0 &0\\
1 & 0& 0 &0\\
0 & 0& 1 &0\\
0 & 0& 0 &1
\end{array}\right).
\end{equation}
The determinant is thus equal to $-1$ and the inverse metric is 
\begin{equation}
\left(\begin{array}{cccc}
0 & 1& 0 &0\\
1 & V& 0 &0\\
0 & 0& 1 &0\\
0 & 0& 0 &1
\end{array}\right).
\end{equation}
In the inverse metric $g^{\eta\eta}=0$, thus $d\eta$ is null.
\end{proof}

\section{Classical properties}

In this part of the paper we will review classical properties of the spacetimes. In our terminology, classical conditions are those which can be expressed through properties of geodesics. The specific part of this theory, tailored to features of Lorentzian geometry, is called causality theory and it is explained in \cite{Ellis-Hawking, Global-lor} and \cite{Minguzzi2019b}.

\subsection{Geodesic completeness}

Geodesic completeness means that every affinely parametrized geodesic can be extended indefinitely (\cite{Global-lor} Section 6.2). In other words, the 
maximal open interval of extendibility of any solution to the equation of affinely parametrized geodesics is the whole real line\footnote{Let us remind that the proper time cannot be defined for null geodesics, whereas for affine parametrization it is possible.}. In order to determine geodesics we will use Hamiltonian formulation. The geodesic equations for position $x^\mu(s)$ and corresponding momenta  $p_\mu(s)$ are given by
\begin{equation}
\dot{x}^\mu=\{x^\mu,h\},\ \dot{p}_\mu=\{p_\mu,h\},
\end{equation}
where $h=\frac{1}{2}g^{\mu\nu}(x)p_\mu p_\nu$, $\dot{}$ denotes the derivative with respect to affine parameter $s$, the Poisson bracket is given by $\{x^\mu,p_\nu\}=\delta^\mu_\nu$ and $g^{\mu\nu}$ is the inverse metric.\footnote{This Hamiltonian formulation for affine geodesics is based on the Lagrangian that can be found, for example, in \cite{Misner1973} Section 13.4.}

\begin{remark}
Let us notice, that operator $\frac{1}{2}\dA$ can be regarded as a quantization of the function $h$ \cite{Fulling-geo, DSL-geo}. This reinforces our terminology of classical and quantum properties, although different versions of quantizations can lead to modifications of $\dA$ for example by addition of a term proportional to the Ricci scalar \cite{Fulling-geo}.  In principle, such modifications can lead to essentially self-adjoint operators. We leave open the question, if they really change the situation.
\end{remark}

The classical Hamiltonian is in our case equal to
\begin{equation}
h=\frac{1}{2}g^{\mu\nu}p_\mu p_\nu=p_\eta p_{\z}+\frac{1}{2}V(x)p_{\z}^2+\frac{1}{2}p_x^2+\frac{1}{2}p_y^2,
\end{equation}
due to Lemma \ref{lm:gg}.
It has many conserved quantities $p_\eta$, $p_{\z}$, $p_y$ and
\begin{equation}\label{eq:CC}
C=p_x^2+V(x)p_{\z}^2.
\end{equation}
The function $C=2h-p_y^2-2p_{\z}p_\eta$ is conserved because $h$ is the \textit{time independent} Hamiltonian \cite{Arnold}.
Let us consider a geodesic with $p_{\z}\not=0$.
The equation \eqref{eq:CC} imposes restrictions on allowed values of $x(s)$
\begin{equation}
V(x(s))\leq \frac{C}{p_{\z}^2}
\end{equation}
There exist barriers with $V(x)> \frac{C}{p_{\z}^2}$ on the both sides of the initial value $x(0)$, because of the spikes in the potential. The trajectory cannot cross these barriers, so the motion in $x$ is bounded and $|x(s)|\leq D$ for some $D>0$. Let us introduce 
\begin{equation}
E=\sup_{|x|\leq D} |V(x)|.
\end{equation}
We can now estimate remaining velocities 
\begin{equation}
\dot{\eta}=p_{\z},\ |\dot{\z}|\leq|p_\eta|+E|p_{\z}|,\ \dot{y}=p_y.
\end{equation}
They are bounded and thus geodesics can be extended indefinitely (if they are parametrized by an affine parameter).

For $p_{\z}=0$ we have motion on the geodesics along a straight line, which can also be extended indefinitely.

\subsection{Time orientation}

Lorentzian geometry is special as it allows, in addition to the notion of standard orientation, to introduce also time orientation (\cite{Ellis-Hawking} Section 6.1, \cite{Minguzzi2019b} Section 1.7). 

A tangent vector $X$ is timelike if $g(X,X)<0$, null if $g(X,X)=0$ and nontrivial if $X\not=0$. The space of nontrivial, causal tangent vectors in a given point $q$ of the spacetime $M$  is defined as
\begin{equation}
\{X\in T_qM\colon g(X,X)\leq 0,\ X\not=0\}.
\end{equation}
It is a sum of two connected components which are related by multiplication by $-1$. This division depends continuously on points of the spacetime and locally we can decide in continuous way which part constitutes \textit{future directed causal vectors} and which \textit{past directed causal vectors}. If this can be done globally, we will say that spacetime is time-orientable. Not all spacetimes can be time oriented.\footnote{In such situation, there exists a double cover, that is time orientable \cite{Ellis-Hawking}. In our case $M_V$ is simply connected, so it is time-orientable.} However, time orientation is one of the crucial classical notions of physically reasonable spacetimes, that is at the bottom of so-called causality ladder (see \cite{Minguzzi2019b} Section 4 for thorough explanation).

Let us introduce a time orientation on $M_V$: A timelike vector $X=X^\eta\partial_\eta+X^z\partial_{\z}+X^x\partial_x+X^y\partial_y$ is future directed if
$X^\eta> 0$ and past directed if $X^\eta<0$. Let us remind that the scalar product of a null vector with a timelike vector is always nonzero. For every timelike vector $X$ indeed $X^\eta\not=0$, because $d\eta$ is null. We extend our definition by continuity to any nontrivial, causal vector.  

\begin{lm}\label{lm-1}
For $q\in M_V$, a causal vector $X\in T_qM_V$  is future directed if and only if one of the two excluding conditions holds
\begin{enumerate}
\item $X^\eta>0$,
\item $X^\eta=X^x=X^y=0$ and $X^{\z}<0$.
\end{enumerate}
\end{lm}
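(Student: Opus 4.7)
The plan is to pick a specific future-directed timelike vector $T$ at the given point and classify causal vectors $X$ by the sign of $g(X,T)$, using the standard Lorentzian fact that a nonzero causal vector $X$ is future directed if and only if $g(X,T)<0$ for some (equivalently, every) future-directed timelike $T$. At the point under consideration, I set $T=\partial_\eta+\tfrac{V-2}{2}\partial_z$, where $V=V(x)$ is the value of the potential there. Then $T^\eta=1>0$, so by the definition adopted just above the lemma $T$ is future directed on timelike vectors, and $g(T,T)=-V+(V-2)=-2<0$, confirming that $T$ is timelike.

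A short computation gives
\begin{equation}
g(X,T) = X^z - \Bigl(\tfrac{V}{2}+1\Bigr)X^\eta.
\end{equation}
If $X^\eta>0$, the causal condition $g(X,X)\leq 0$ rearranges to $X^z\leq \tfrac{V}{2}X^\eta-\tfrac{(X^x)^2+(X^y)^2}{2X^\eta}$, hence $g(X,T)\leq -X^\eta<0$ and $X$ is future directed. If $X^\eta<0$, the analogous inequality is reversed and gives $g(X,T)>0$, so $X$ is past directed. If $X^\eta=0$, causality forces $(X^x)^2+(X^y)^2\leq 0$, so $X^x=X^y=0$ and $X=X^z\partial_z$ is null with $X^z\neq 0$; then $g(X,T)=X^z$, whence $X$ is future directed precisely when $X^z<0$. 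The two listed conditions are mutually exclusive through the value of $X^\eta$, and together with the past-directed case they exhaust all nonzero causal vectors.

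The only delicate point is justifying the criterion ``$g(X,T)<0$ iff future directed'' for null $X$: for timelike $X$ it follows from the reverse Cauchy-Schwarz inequality, but null vectors require a short argument. I would either quote this as a standard fact about time cones, or verify it directly by noting that $X+\epsilon T$ is timelike with $(X+\epsilon T)^\eta$ of the expected sign for small $\epsilon>0$, anchoring the null direction in the correct time cone by continuity from the timelike definition.
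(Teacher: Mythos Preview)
Your proof is correct. The computation of $g(X,T)$ checks out, and the three-case analysis on the sign of $X^\eta$ is clean; in particular the inequality $g(X,T)\leq -X^\eta - \tfrac{(X^x)^2+(X^y)^2}{2X^\eta}$ when $X^\eta>0$ (and its reversal when $X^\eta<0$) is exactly right.

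Your route is genuinely different from the paper's. The paper argues directly by continuity: for $X^\eta>0$ it simply says the vector lies in the closure of the future timelike cone (where the definition applies), and for the boundary case $X^\eta=0$ it builds an explicit one-parameter family $X(\lambda)=\lambda\,\delta X^\eta\,\partial_\eta + X^z\partial_z$ of timelike future-directed vectors converging to $X$ when $X^z<0$. You instead import the standard Lorentzian criterion ``$X$ is future directed iff $g(X,T)<0$ for a fixed future timelike $T$'' and reduce everything to a computation with a well-chosen $T$. Your approach is more systematic and immediately handles all three sign cases uniformly, at the cost of invoking (or reproving) the inner-product criterion for null vectors; the paper's approach is more self-contained but slightly more ad hoc. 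The perturbation argument you sketch at the end to justify the null case is essentially the same device the paper uses, so the two proofs converge there.
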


\begin{proof}
By continuity, if $X^\eta>0$ then the vector is future directed even if it is null. If $X^\eta=0$ then it needs to be null and
\begin{equation}
g(X,X)=(X^x)^2+(X^y)^2\leq 0\Rightarrow X^x=X^y=0\text{ and } X=X^{\z} \partial_{\z}.
\end{equation}
Let us consider small perturbation of this vector
\begin{equation}
X(\lambda)=\lambda \delta X^\eta \partial_\eta+X^z\partial_{\z}.
\end{equation}
Furthermore
\begin{equation}
g(X(\lambda),X(\lambda))=2\lambda\delta X^\eta X^z+O(\lambda^2),
\end{equation}
so if $X^{\z}\delta X^\eta<0$ then $X(\lambda)$ is timelike for $\lambda\in (0,\lambda_{max})$ with some $\lambda_{max}>0$. It is also future directed if $\delta X^\eta>0$.  We see that if $X^{\z}<0$ then there exists a one parameter family of timelike future directed vectors converging to $X$, thus $X$ is future directed. If $X^{\z}>0$ then similar considerations shows that it is past directed.
\end{proof}

\begin{remark}
The potential $V(x)$ does not have a definite sign. If $V(x)\leq 0$ then for a future directed causal vector $X$ it holds $X^{\z}<0$ independently whether $X^\eta=0$ or not. However, for points where $V(x)>0$ the sign of $X^{\z}$ can be arbitrary for future directed causal vectors.
\end{remark}

\subsection{Causality}

Once we know that a spacetime admits time orientation, the next important property is causality (see \cite{Ellis-Hawking} Section 6.4, \cite{Global-lor} Section 3.2). We will explain now this concept. For simplicity we will work with the following class of curves (see \cite{Minguzzi2019b} Section 2.3):

\begin{df}
A future (respectively past) directed, causal, absolutely continuous (abbreviation AC) curve is an absolutely continuous map
\begin{equation}
\gamma\colon [0,1]\rightarrow M,
\end{equation}
such that derivative $\dot{\gamma}$ is future (respectively past) directed causal vector for almost every $s\in [0,1]$.
\end{df}

Here $\dot{}$ denotes derivative with respect to $s$. In our terminology future or past directed causal vectors are always nontrivial.

\begin{remark}
Absolutely continuous in this context means, that all coordinate functions are absolutely continuous: their derivatives exist almost everywhere, the derivatives are Lebesgue integrable and the fundamental theorem of calculus holds in terms of the Lebesgue integral (see \cite{Rudin-RC}, Chapter 7). 
\end{remark}

\begin{remark} \label{rmk:1}
As shown in \cite{Minguzzi2019b} Theorem 2.12 every two points connected by a future directed AC curve $\gamma$ can be also connected by a curve, that consists of the segments of geodesics and this curve can be chosen belonging to an arbitrary small neighbourhood of $\gamma([0,1])\subset M$. The same is true also for so-called continuous causal curves \cite{Minguzzi2019b}, thus there is a freedom in the choice of the class of curves, that will not affect constructions and definitions of the objects considered in this paper.
\end{remark}

The \textit{causal spacetimes} are defined by the property: there are no closed, future directed causal AC curves. These are spacetimes without time machine paradoxes.

\begin{prop}\label{lm-c}
Suppose, that there exists in $M_V$ a future directed causal AC curve from a point $q_0=(\eta_0,\z_0,x_0,y_0)$ to a point $q_1=(\eta_1,\z_1,x_1,y_1)$ then one of two mutually excluding conditions holds
\begin{enumerate}
\item $\eta_1>\eta_0$,
\item $\eta_1=\eta_0$ and $\z_1<\z_0$.
\end{enumerate}
In particular $p_1\not=p_0$ and  there are no closed, causal AC curves.
\end{prop}

\begin{proof}
Let us consider a future directed causal AC curve
\begin{equation}
[0,1]\ni s\rightarrow \gamma(s)=(\eta(s),\z(s),x(s),y(s)),\quad \gamma(0)=q_0,\ \gamma(1)=q_1.
\end{equation}
Along the curve $\dot{\eta}\geq 0$ (almost everywhere) by Lemma \ref{lm-1}, thus we have $\eta_1\geq \eta_0$. Moreover, if $\eta_1=\eta_0$ then on the curve $\dot{\eta}=0$ and so $\dot{\z}<0$ (also almost everywhere) and $\z_1<\z_0$. We conclude that $q_1\not=q_0$ and there are no closed, future directed causal AC curves.
\end{proof}

\subsection{Global hyperbolicity}\label{sec:comp}

We will now show  the important, classical property of the spacetime, namely, global hyperbolicity. The spacetime is \textit{globally hyperbolic} if any of the two equivalent conditions hold\footnote{There are many equivalent characterizations of globally hyperbolic spacetimes \cite{Minguzzi2019b}.}
\begin{enumerate}
\item The spacetime is diffeomorphic to $\R\times \Sigma$ where for every $t\in\R$, $\{t\}\times \Sigma$ is a Cauchy surface (\cite{Bernal2003, Global-lor,Ellis-Hawking} and \cite{Minguzzi2019b} Theorem 4.120)
\item Spacetime is strongly causal and closure of every causal diamond is compact
(see \cite{Minguzzi2019b} Section 4.5)\footnote{In \cite{Minguzzi2019b} a weaker property of non-imprisoning is assumed, but it follows from strong causality (see \cite{Minguzzi2019b}). Actually, if closures of all causal diamonds are compact, these two conditions are equivalent, because every globally hyperbolic spacetime according to the definition from \cite{Minguzzi2019b} is also strongly causal.}
\end{enumerate}
We will use the second definition that we will now explain in details. 

The $+$ future/ $-$ past developments $J^\pm(q)$ of a point $q$ are defined as follows
\begin{align}
J^\pm(q)=&\{q'\in M\colon \exists \text{ a future/past directed causal AC curve}\nonumber\\ &\gamma\colon [0,1]\rightarrow M,\ \gamma(1)=q',\ \gamma(0)=q\}\cup \{q\}.
\end{align}
According to Remark \ref{rmk:1} we can replace our definition of causal curves by various other types of curves from \cite{Minguzzi2019b} without changing the resulting sets.

The \textit{causal diamond} for two points $q_0,q_1\in M$ is the set
\begin{equation}
J^+(q_0)\cap J^-(q_1),
\end{equation}
The spacetime is \textit{strongly causal} if for every point $q\in M$ and every open neighbourhood $U$ of $q$, there exist points $q_0,q_1\in M$ such that
\begin{equation}
q\in J^+(q_0)\cap J^-(q_1)\subset U.
\end{equation}
As the name suggests, it is a stronger version of causality \cite{Minguzzi2019b}.

Let us stress out why global hyperbolicity is important. Globally hyperbolic spacetimes have a well-posed Cauchy problem for the Klein-Gordon equation (as well as they enjoy other nice properties, see \cite{Ellis-Hawking, Global-lor}). 
This property is of utmost importance in the formulation of quantum field theory on the curved background \cite{Wald}. The class of globally hyperbolic spacetimes is believed to be  the class in which the QFT can be reasonably formulated \cite{Wald}. 

In order to prove that $M_V$ is globally hyperbolic we will need few technical results:\footnote{We have chosen to work with this definition to avoid technicalities related to limits of causal AC curves (see \cite{Minguzzi2019b}).}

\begin{lm}\label{lm-2}
Let $q=(\eta,\z,x,y)\in M_V$ be such that $|x|\leq \x$ (see Equation \eqref{def-rn} for the definition of $\x$). Then for a causal future directed vector $X\in T_qM_V$ the following inequalities hold
\begin{enumerate}
\item $\sqrt{n}X^{\eta}-\frac{1}{\sqrt{n}}X^{\z}> 0$,
\item $|X^x|\leq \sqrt{n}X^\eta-\frac{1}{\sqrt{n}}X^{\z}$,
\item $|X^y|\leq \sqrt{n}X^\eta-\frac{1}{\sqrt{n}}X^{\z}$,
\item $|X^{\z}|\leq nX^\eta-X^{\z}$.
\end{enumerate}
\end{lm}

\begin{proof}
For such $q$ we have  (as $V(x)\leq n$)
\begin{align}
&0\geq g(X,X)\geq -n (X^\eta)^2+2X^\eta X^{\z}+(X^x)^2+(X^y)^2=\nonumber\\
&=-\left(\sqrt{n}X^\eta-\frac{1}{\sqrt{n}}X^{\z}\right)^2+\frac{1}{n}(X^{\z})^2+
(X^x)^2+(X^y)^2. \label{eq-X}
\end{align}
Thus for a causal vector $\sqrt{n}X^\eta-\frac{1}{\sqrt{n}}X^{\z}\not=0$. In the given point the cone of future directed causal directions is connected, thus $\sqrt{n}X^\eta-\frac{1}{\sqrt{n}}X^{\z}$ has in it a definite sign. We can determine this sign by taking a single future directed causal vector, for example $-\partial_{\z}$. In this case $X^{\z}=-1$ and other components vanish thus $\sqrt{n}X^\eta-\frac{1}{\sqrt{n}}X^{\z}>0$.

From \eqref{eq-X} we obtain for any causal future directed vector $X$
\begin{equation}
|X^x|\leq \left|\sqrt{n}X^\eta-\frac{1}{\sqrt{n}}X^{\z}\right|=\sqrt{n}X^\eta-\frac{1}{\sqrt{n}}X^{\z}.
\end{equation}
The same applies for $X^y$ and a similar inequality holds for $X^{\z}$.
\end{proof}

We will now state the second technical fact:

\begin{lm}\label{lm:bounded}
Let us consider a future directed causal AC curve
\begin{equation}
[0,1]\ni s\rightarrow \gamma(s)=(\eta(s),\z(s),x(s),y(s))\in M_V.
\end{equation}
Let us denote $\Delta\eta=\eta(1)-\eta(0)$, $\Delta\z=\z(1)-\z(0)$ and let us choose $N\in \Z_+$ such that
\begin{equation}\label{eq:NNN}
|x(0)|\leq \sqrt{N},\quad |x(1)|\leq \sqrt{N},\quad \Delta\eta\leq \sqrt{N},\quad \Delta \z\geq -\sqrt{N}.
\end{equation}
Then the following inequalities hold for any $s\in [0,1]$
\begin{align}
|x(s)-x(0)|&\leq\sqrt{N}\Delta\eta-\frac{1}{\sqrt{N}}\Delta{\z}\\
|y(s)-y(0)|&\leq\sqrt{N}\Delta\eta-\frac{1}{\sqrt{N}}\Delta{\z}\\
|\z(s)-\z(0)|&\leq N\Delta\eta-\Delta{\z}\\
|\eta(s)-\eta(0)|&\leq \Delta\eta
\end{align}
\end{lm}

\begin{proof}
First we show that $|x(s)|<\mathfrak{x}_N$ for all $s\in[0,1]$. Define 
\begin{equation}
n:=\min\left\{m\in \Z_+\cup\{0\}\colon\sup_{s\in[0,1]} |x(s)|< \xx_m\right\}.
\end{equation}
The supremum is finite (as the interval $[0,1]$ is compact), so the set of $m$ is not empty.

We assume first that $n>0$.
Thus, there exists $s\in[0,1]$ such that $|x(s)|\geq\mathfrak{x}_{n-1}$.  On one hand, we can estimate (as $x(t)$ is absolutely continuous)
\begin{align}
& \int_0^1|\dot{x}(t)|dt= \int_0^s|\dot{x}(t)|dt+ \int_s^1|\dot{x}(t)|dt\geq|x(s)-x(0)|+|x(s)-x(1)|\geq\nonumber\\
&\geq 2|x(s)|-|x(0)|-|x(1)|\geq n+3\sqrt{n}-2\sqrt{N}
\end{align}
using $|x(s)|\geq\mathfrak{x}_{n-1}$ and \eqref{eq:NNN}
in the last inequality.
On the other hand, from inequalities in Lemma \ref{lm-2}, which are valid for almost every $t$, we derive
\begin{align}
&\int_0^1|\dot{x}(t)|dt\leq
\int_0^1 \left(\sqrt{n}\dot{\eta}(t)-\frac{1}{\sqrt{n}}\dot{z}(t)\right)dt=\nonumber\\
&=\sqrt{n}\Delta\eta-\frac{1}{\sqrt{n}}\Delta{\z}\overset{\lozenge}{\leq} \left(\sqrt{n}+\frac{1}{\sqrt{n}}\right)\sqrt{N}\leq \left(\sqrt{n}+1\right)\sqrt{N}.
\end{align}
where we used \eqref{eq:NNN} in the $\lozenge$ inequality.
Thus
\begin{equation}
n+3\sqrt{n}-2\sqrt{N}\leq \left(\sqrt{n}+1\right)\sqrt{N}\ \Longrightarrow
\left(\sqrt{n}+3\right)\sqrt{n}\leq \left(\sqrt{n}+3\right)\sqrt{N},
\end{equation}
so we proved that $n\leq N$. The last conclusion is also true if $n=0$. 

Independently of the choice of the curve $\gamma$
\begin{equation}
|x(s)|<\mathfrak{x}_{N},
\end{equation}
because $\xx_n\leq\xx_N$ for $n\leq N$.

We can now use Lemma \ref{lm-2} to estimate 
\begin{align}
&|x(s)-x(0)|\leq \int_0^s |\dot{x}(t)|dt\leq \int_0^1 |\dot{x}(t)|dt\leq\int_0^1 \left(\sqrt{N}\dot{\eta}(t)-\frac{1}{\sqrt{N}}\dot{\z}(t)\right)dt=\nonumber\\
&=\sqrt{N}\Delta\eta-\frac{1}{\sqrt{N}}\Delta{\z},
\end{align}
for any parameter $s\in[0,1]$.
Exactly the same estimate holds also for $y(s)$. Namely,
\begin{equation}
|y(s)-y(0)|\leq \sqrt{N}\Delta\eta-\frac{1}{\sqrt{N}}\Delta{\z}.
\end{equation}
Similarly,
\begin{align}
|\z(s)-\z(0)|&\leq \int_0^s |\dot{\z}(t)|dt\leq \int_0^1 |\dot{\z}(t)|dt\leq\int_0^1\left( N\dot{\eta}(t)-\dot{\z}(t)\right)dt=N\Delta\eta-\Delta{\z}.
\end{align}
Finally, the function $\eta$ is non-decreasing along the causal future directed AC curve. Thus
\begin{equation}
\eta(s)-\eta(0)\leq \Delta\eta,
\end{equation}
that shows the last inequality.
\end{proof}

We are now ready to prove our result:

\begin{thm}\label{thm-gl}
The spacetime $M_V$ is globally hyperbolic.
\end{thm}

\begin{proof}
Let us first show that $M_V$ is strongly causal. Choose 
\begin{equation}
q=(\eta,\z,x,y)\subset U.
\end{equation}
We introduce a notation for every $\beta>0$
\begin{equation}
O_q(\beta)=[\eta-\beta,\eta+\beta]\times 
[\z-\beta,\z+\beta]\times 
[x-\beta ,x+\beta]\times 
[y-\beta,y+\beta ]\subset M_V.
\end{equation}
There exist $\delta>0$ such that
\begin{equation}
O_q(2\delta)\subset U.
\end{equation}
Let us now choose $N\in \Z_+$ such that 
\begin{equation}\label{eq:Nmax}
N>(|x|+2\delta)^2
\end{equation}
and set $\epsilon=\frac{\delta}{2(N+1)}$. Next we choose two points
\begin{equation}
q_i=(\eta_i,\z_i,x_i,y_i),\quad i=0,1,
\end{equation}
such that $q_0\in J^-(q)$, $q_1\in J^+(q)$ and
\begin{equation}
q_0,q_1\in O_q(\epsilon).
\end{equation}
Using \eqref{eq:Nmax} we show
\begin{equation}
|x_0|\leq \sqrt{N},\ |x_1|\leq \sqrt{N},\ \eta_1-\eta_0\leq \sqrt{N},\ \z_1-\z_0\geq -\sqrt{N},
\end{equation}
where the last inequalities follows from the fact that $2\delta\leq \sqrt{N}$.
Recall Lemma \ref{lm:bounded}. For a future directed causal AC curve from $q_0$ to $q_1$:
\begin{equation}
[0,1]\ni s\rightarrow \gamma(s)=(\eta(s),\z(s),x(s),y(s)), \quad \gamma(0)=q_0,\ \gamma(1)=q_1,
\end{equation}
we have
\begin{equation}
|x(s)-x_0|\leq \sqrt{N}(\eta_1-\eta_0)-\frac{1}{\sqrt{N}}(\z_1-\z_0)\leq 
\left(\sqrt{N}+\frac{1}{\sqrt{N}}\right)2\epsilon\leq \delta
\end{equation}
Similarly $|y(s)-y_0|\leq \delta$, $|\z(s)-\z_0|\leq \delta$ and $|\eta(s)-\eta_0|\leq \delta$. Hence, for every $s$
\begin{equation}
\gamma(s)=(\eta(s),\z(s),x(s),y(s))\in O_q(2\delta)\subset U.
\end{equation}
As the causal AC curve was arbitrary, we have $J^+(q_0)\cap J^-(q_1)\subset U$. This proves strong causality of $M_V$.

For any two points $q_0,q_1\in M_V$ all coordinates of $J^+(q_0)\cap J^-(q_1)$ are bounded thus
\begin{equation}
\overline{J^+(q_0)\cap J^-(q_1)}
\end{equation}
is compact. This shows global hyperbolicity of the spacetime $M_V$.
\end{proof}

\section{Non-self-adjointness}\label{sec:dA-1}

We will now turn to \textit{quantum properties} of spacetimes. These are properties related to propagation of the fields, which are not expressed simply in terms of properties of geodesics. The simplest example concerns properties of the Klein-Gordon equation on a spacetime $M$
\begin{equation}
(\dA+m^2)\phi=0,
\end{equation}
for the scalar field with mass $m$. The d'Alembert operator $\dA$ is defined by\footnote{Here, for simplicity we use Einstein summation convention.}
\begin{equation}
\dA=-g^{\mu\nu}\nabla_\mu\nabla_\nu,
\end{equation}
where $g^{\mu\nu}$ is the inverse metric and $\nabla_\mu$ is the covariant derivative.

There exists a distinguished measure defined by the Lorentzian metric 
\begin{equation}
\mu_g=\sqrt{|\det g_{\mu\nu}|}d^4x.
\end{equation}
Thus, as in the Riemannian case, we can define a natural Hilbert space structure
\begin{equation}
\HH=L^2(M,\mu_g).
\end{equation}
The subspace $C^\infty_0(M)$ is dense in this Hilbert space.
The d'Alembert operator (as well as Klein-Gordon operator, that differs by a constant) is formally Hermitian. Precisely, if we consider $\dA$ as an operator with the domain ${\mathcal D}=C^\infty_0(M)$ (smooth and compactly supported functions) then\footnote{We use notation for the scalar product $\langle\cdot,\cdot\rangle$ and the norm $\|\cdot\|$. The scalar product is anti-linear in the first variable.}
\begin{equation}
\langle \psi, \dA\phi\rangle=\langle \dA\psi, \phi\rangle,\quad \psi,\phi\in {\mathcal D}.
\end{equation}
This is not enough to ensure essential self-adjointness (see \cite{Reed-Simon-2} Section X.1) and it might happen, that the formulas like $e^{it\dA}$ are ambiguous.\footnote{There exists at least one self-adjoint extension, because $\dA$ is real \cite{Reed-Simon-2}.} 

We will now show, that exactly this is the case for $M_V$. 

\begin{thm}\label{thm:2}
On the spacetime $M_V$, the d'Alembert operator $\dA$ with the
domain $C^\infty_0(M_V)\subset \HH=L^2(M_V,\mu_g)$ is not essentially self-adjoint.
\end{thm}

The same is true for the Klein-Gordon operator $\dA+m^2$, $m\in\R$, since it differs only by a constant.
The proof of this theorem will occupy the rest of the work.

\subsection{Reformulation of the problem}\label{sec:dA}

On the $M_V$ spacetime  $\det g_{\mu\nu}=-1$ holds, so we have $\mu_g=d\eta d\z dx dy$.
Let us notice the identity (\cite{Birrell} Section 2.2) satisfied by the d'Alembert operator\footnote{We use Einstein's summation convention.} 
\begin{equation}
\dA\psi=-\frac{1}{\sqrt{g}}\partial_\mu (\sqrt{g}g^{\mu\nu} \partial_\nu\psi),
\end{equation}
where $\sqrt{g}=\sqrt{|\det g_{\mu\nu}|}$. As the determinant $\det g_{\mu\nu}=-1$ the operator takes the form
\begin{equation}
\dA=-2\partial_\eta\partial_{\z}-V(x)\partial_{\z}^2-\partial_x^2-\partial_y^2.
\end{equation}
Let us now consider a Hilbert space of functions $\phi(x,p_y,p_{\z},p_\eta)$
\begin{equation}
\HF=L^2(\R^4,\hat{\mu}_g),\quad \hat{\mu}_g=dx dp_ydp_{\z}dp_{\eta}.
\end{equation}
The scalar product of the functions $\phi_i\in \HF$ is given by
\begin{equation}
\langle\phi_1,\phi_2\rangle_{\hspace{-1pt}\wedge}=\int dx dp_y dp_{\z} dp_\eta\ 
\overline{\phi_1(x,p_y,p_{\z},p_\eta)}\phi_2(x,p_y,p_{\z},p_\eta).
\end{equation}
We now introduce a unitary operator which is a partial Fourier transform
\begin{equation}
\F\colon \HH \rightarrow \HF,
\end{equation}
given by the formula for $\psi\in \HH$
\begin{equation}\label{eq:F}
\F(\psi)(x,p_y,p_{\z},p_\eta)=\frac{1}{(2\pi)^{3/2}}\int_{\R^3} d\eta d\z dy \ \psi(\eta,\z,x,y)e^{-i(p_{\eta}\eta +p_{\z}\z+p_y y)}.
\end{equation}
We will denote by $\widehat{\dA}$ our operator in this representation 
\begin{equation}
\widehat{\dA}=\F\circ \dA \circ \F^{-1},
\end{equation}
with the domain $\widehat{D}=\F(C^\infty_0(M_V))\subset \HF$.

Let us define a family of operators on the domain $C^\infty_0(\R)\subset \LL$
\begin{equation}
\dA_{p_y,p_{\z},p_\eta}=-\partial_x^2+V(x)p_{\z}^2+2p_\eta p_{\z}+p_y^2.
\end{equation}
The partial Fourier transform \eqref{eq:F} block diagonalizes our operator $\dA$ in the following sense:

\begin{lm}\label{lm:F}
Let $\phi\in \widehat{D}$, then $\phi\in C^\infty(\R^4)$ (smooth functions) and there exists $L>0$ such that
\begin{equation}
\supp \phi\subset[-L,L]\times \R^3.
\end{equation}
In particular, functions $\phi_{p_y,p_{\z},p_\eta}(x)=\phi(x,p_y,p_{\z},p_\eta)$ are smooth and of compact support and moreover
\begin{equation}\label{eq:diag}
[\widehat{\dA}\phi](x,p_y,p_{\z},p_\eta)=[\dA_{p_y,p_{\z},p_\eta}\phi_{p_y,p_{\z},p_\eta}](x).
\end{equation}
\end{lm}

\begin{proof}
Let us notice that $\phi=\F(\psi)$ for $\psi\in C^\infty_0(M_V)$. There exists $L>0$  such that
\begin{equation}
\forall_{|x|>L}\psi(\eta,\z,x,y)=0,
\end{equation}
thus $\phi(x,p_y,p_{\z},p_\eta)=0$ for $|x|>L$ and the functions $\phi_{p_y,p_{\z},p_\eta}$ are compactly supported. Function $\phi$ is smooth as the Fourier transform of smooth compactly supported function. From that also follows, that functions $\phi_{p_y,p_{\z},p_\eta}$ are smooth, thus $\phi_{p_y,p_{\z},p_\eta}\in C^\infty_0(\R)$.
The equation \eqref{eq:diag} is a basic property of the Fourier transform. 
\end{proof}

\subsection{Strategy of the proof}


W describe now the strategy of the proof of Theorem \ref{thm:2}.

First part of the proof will be devoted to the analysis of $\dA_{p_y,p_{\z},p_\eta}$ with the domain $C^\infty_0(\R)$. These are second order differential operators on the real line and the theory of such operators is highly developed (see references in \cite{Reed-Simon-2}). They turn out to be not essentially self-adjoint. We will show, that both kernels $\ker (\dA_{p_y,p_{\z},p_\eta}^\dagger\pm i)$ are two dimensional. 
Following standard notation (see \cite{Reed-Simon-2} Section X.1) we will say in this situation, that \textit{the operator has both deficiency indices equal to $2$}.
As we mentioned in Section \ref{sec:metric}, our example is based on specific counter-example from \cite{Reed-Simon-2}. We will sketch how this property follows from \cite{Reed-Simon-2}. However, we will provide self-consistent proof of this fact in Lemma \ref{lm-8}.

In Section \ref{sec:non} we will show how to construct a function in $\HF$ from functions in the kernels $\ker (\dA_{p_y,p_{\z},p_\eta}^\dagger+ i)$ in such a way, that it is measurable and the norm is finite. Applying Lemma \ref{lm:F}, we will show, that our function is perpendicular to the subspace
\begin{equation}\label{eq:imd}
\{\psi\in \HF\colon \psi=(\widehat{\dA}-i)\phi,\ \phi\in \widehat{D}\},
\end{equation}
so it belongs to $\ker (\widehat{\dA}^\dagger+ i)$ (see \cite{Reed-Simon-2} Section X.1). That finishes the proof.

\subsection{Non-self-adjointness of the reduced operator}
\label{sec:sa-reduced}

According to \cite{Reed-Simon-2} (Theorem X.9 and Example 3 following after it) for every values of $p_\eta,p_{\z},p_y$, where $p_{\z}\not=0$  the operator with the domain $C^\infty_0(\R)\subset \LL$ (acting in $x$ and parametrized by $p_\eta,p_{\z},p_y$)
\begin{equation}
H_0=-\partial^2_x-p_{\z}^2x^4+p_y^2+2p_\eta p_{\z}
\end{equation}
has both deficiency indices equal to $2$. This means, in this case, that for any $\lambda\in\C$ there exist two dimensional family of distributional solutions to the equation
\begin{equation}
(H_0^\dagger+\lambda)\psi=0,\quad \psi\in D'(\R),
\end{equation}
which are square integrable. In fact, every such function is smooth as the potential is smooth.

In order to prove the same defficiency indices for $\dA_{p_y,p_{\z},p_\eta}=H_0+V_1$ it is then enough to show that the additional potential 
\begin{equation}
V_1=p_{\z}^2\sum_{n=1}^\infty (\sigma_n(x)+\sigma_n(-x))
\end{equation}
is $H_0$-bounded with a bound less than one (\cite{Reed-Simon-2} Theorem X.12). This can be done in the way as in \cite{Reed-Simon-2} page 158 \footnote{The verbatim application of the method from \cite{Reed-Simon-2} will show the result only for sufficiently small $p_{\z}$. We will not dwell into details, as we provide an alternative proof valid for all $p_{\z}\not=0$.}. However,  we provide here an alternative proof of the lack of essential self-adjointness for the operator $\dA_{p_y,p_{\z},p_\eta}$ (based on a method from \cite{Wintner}) for $p_{\z}\not=0$. We will use the following criterium:

\begin{lm}\label{lm-pr}
Let us suppose that $V_o, V_b\in C^\infty (\R)$ satisfy 
\begin{enumerate}
\item \label{en:1} $V_o\leq -C$, for some $C>0$,
\item \label{en:2} $(-V_o)^{-\frac{1}{2}}\in L^1(\R,dx)$,
\item \label{en:3} $(-V_o)^{-\frac{1}{2}}\left(\frac{5(V_o')^2-4V_o''V_o}{16V_o^2}-V_b\right)\in L^1(\R,dx)$,
\end{enumerate}
here $'$ denotes derivative.
Then the operator $H=-\partial_x^2+V_o+V_b$ with the domain $C^\infty_0(\R)\subset \LL$ has both deficiency indices equal to $2$.
\end{lm}


\begin{proof}
We will show that two independent distributional solutions
\begin{equation}\label{sol:Ham}
(H^\dagger+\lambda) \phi=0
\end{equation}
are square integrable for any $\lambda\in \C$. 

The idea of the proof is to show, that the so-called WKB or Liouville-Green \cite{Olver1997} ansatz $\Phi_\pm=\frac{1}{\sqrt{2S'}}e^{\pm i\int_0^x S'}$ (where $S'=(-V_o)^{\frac{1}{2}}$) approximates well enough solutions to the differential equation \eqref{sol:Ham}. 
In order to show the approximation property we will rewrite the differential equation in the matrix, but first order form:
\begin{equation}\label{sol:Ham-matrix}
\frac{d}{dx}\bmat{\phi_+' & \phi_-'\\ \phi_+ &\phi_-}=\bmat{0 & \lambda+V_o+V_b\\ 1&0}\bmat{\phi_+' & \phi_-'\\ \phi_+ &\phi_-}.
\end{equation}
We introduce an auxiliary matrix
\begin{equation}
\underbrace{\bmat{\Psi_+ & \Psi_-\\ \Phi_+ &\Phi_-}}_{=M}=\frac{1}{\sqrt{2S'}}\underbrace{\bmat{ -\frac{1}{2}\frac{S''}{S'}+iS' & -\frac{1}{2}\frac{S''}{S'}-iS' \\ 1 & 1}}_{=A}\underbrace{\bmat{e^{iS} &0\\ 0 &e^{-iS}}}_{=B},
\end{equation}
where $S(x)=\int_0^x (-V_o)^{\frac{1}{2}}$ and $'$ denotes derivative with respect to $x$. The functions $\Phi_\pm$ are square integrable by condition \ref{en:2} and the matrix $M$ satisfies
\begin{equation}\label{sol:matrix-M}
M'=\bmat{0 & V_o+V_p\\ 1 &0}M,
\end{equation}
where $V_p=\frac{5(V_o')^2-4V_o''V_o}{16V_o^2}$. The inverse of $M$ is
\begin{equation}
M^{-1}=-iB^{-1}\frac{1}{\sqrt{2S'}}\underbrace{\bmat{1 & \frac{1}{2}\frac{S''}{S'}+iS'\\ -1 & -\frac{1}{2}\frac{S''}{S'}+iS'}}_{=C}.
\end{equation}
We can now write the solution to \eqref{sol:Ham-matrix} in the form
\begin{equation}
\bmat{\phi_+' & \phi_-'\\ \phi_+ &\phi_-}=M U,
\end{equation}
where the invertible matrix $U$ satisfies by \eqref{sol:Ham-matrix} and \eqref{sol:matrix-M}
\begin{equation}
U'=M^{-1}\bmat{0 & \lambda+V_b-V_p\\ 0 &0}M U.
\end{equation}
We will show that $\|M^{-1}\bmat{0 & \lambda+ V_b-V_p\\ 0 &0}M\|\in L^1(\R,dx)$ ($l^2$ matrix norm) thus $U$ has a limit in $\pm\infty$ and, as $\Phi_\pm$ are square integrable, the same is true for $\phi_\pm$ because
\begin{equation}
\bmat{\phi_+ &\phi_-}=\bmat{\Phi_+ &\Phi_-}U.
\end{equation}
In fact, it is enough to show, that
\begin{equation}
\left\|\frac{1}{2S'}C\bmat{0 & \lambda+V_b-V_p\\ 0 &0}A\right\|\in L^1(\R,dx),
\end{equation}
as $B$ is unitary. However, we see that the latter matrix is just
\begin{equation}
\frac{\lambda+V_b-V_p}{2S'}\bmat{1 &1\\ -1 & -1}=\frac{\lambda}{2S'}\bmat{1 &1\\ -1 & -1}+\frac{V_b-V_p}{2S'}\bmat{1 &1\\ -1 & -1}.
\end{equation}
The conditions \ref{en:2} and \ref{en:3} for the potential ensures integrability, thus the operator has both deficiency indices equal to $2$.
\end{proof}

We can now apply this lemma to $\dA_{p_y,p_{\z},p_\eta}$:

\begin{lm}\label{lm-8}
The operator $\dA_{p_y,p_{\z},p_\eta}$ for $p_{\z}\not=0$ with the domain $C_0^\infty(\R)\subset \LL$ is not essentially self-adjoint and it has both deficiency indices equal to $2$.
\end{lm}

\begin{proof}
Let us consider operator
\begin{equation}\label{eq:H_1}
H_1=\dA_{p_y,p_{\z},p_\eta}-p_{\z}^2-2p_\eta p_{\z}-p_y^2=-\partial^2_x-p_{\z}^2(x^4+1)+V_1,
\end{equation}
with $V_1=p_{\z}^2\sum_{n=1}^\infty (\sigma_n(x)+\sigma_n(-x))$ and domain $C_0^\infty(\R)\subset \LL$. We will apply now Lemma \ref{lm-pr}. Choosing $V_o=-p_{\z}^2(x^4+1)$, $V_b=V_1$, $C=p_{\z}^2$ we can estimate
\begin{enumerate}
\item $V_o\leq -C$.
\item $(-V_o)^{-\frac{1}{2}}=\frac{1}{|p_{\z}|\sqrt{x^4+1}}\in L^1(\R,dx)$.
\item From the inequality $|2x^6-3x^2|\leq 3(x^4+1)^{\frac{3}{2}}$ we obtain 
\begin{equation}
\left|\frac{5(V_o')^2-4V_o''V_o}{16V_o^2(-V_o)^{\frac{1}{2}}}\right|=\frac{\left|5p_{\z}^4x^6-3p_{\z}^4 x^2(x^4+1)\right|}{|p_{\z}|^5(x^4+1)^{\frac{5}{2}}}\leq 
\frac{3}{|p_{\z}|(x^4+1)}\in L^1(\R,dx).
\end{equation}
\item $(-V_o)^{-\frac{1}{2}}V_b\in L^1(\R,dx)$: Let us notice,  that 
\begin{equation}
\sup_{x\in \supp\sigma_n} \sigma_n(x)\leq \sup_{x\in \supp\sigma_n} x^4+n+1\leq (\x+\epsilon_n)^4+n+1\leq 2\cdot 3^4(n+1)^4,
\end{equation}
where in the last inequality we used $\x+\epsilon_n\leq \frac{1}{2}(n+1)+\frac{3}{2}\sqrt{n+1}+\frac{1}{2}\leq 3(n+1)$. Moreover, we have
\begin{equation}
\inf_{x\in\supp\sigma_n} (-V_o)^{\frac{1}{2}}=|p_{\z}|\sqrt{\x^4+1}\geq \frac{1}{4}|p_{\z}|(n+1)^2,
\end{equation}
where we applied $\x\geq \frac{1}{2}(n+1)$. We can now estimate
\begin{align}
&\int \left|\frac{V_1}{(-V_o)^{\frac{1}{2}}}\right|dx\leq 2\sum_{n=0}^\infty \epsilon_n
\frac{p_{\z}^2\sup_{x\in \supp\sigma_n} \sigma_n}{\inf_{x\in\supp\sigma_n} (-V_o)^{\frac{1}{2}}}\leq 2\sum_{n=0}^\infty \epsilon_n\frac{2\cdot 3^4|p_{\z}|(n+1)^4}{\frac{1}{4}(n+1)^2}.
\end{align}
As $\epsilon_n$ satisfy
condition \eqref{def-epsn} we see that the integral is finite.
\end{enumerate}
The assumptions of Lemma \ref{lm-pr} are satisfied and the operator $H_1$  has both deficiency indices equal to $2$. Finally, the same is true for the operator $\dA_{p_y,p_{\z},p_\eta}$ that differs by a constant from $H_1$, see \eqref{eq:H_1}.
\end{proof}

\subsection{Non-self-adjointness of the full operator}\label{sec:non}

We will consider the operator $\widehat{\dA}$ in the partial Fourier representation.
We showed, that $\dA_{p_y,p_{\z},p_\eta}$  has both deficiency indices equal to $2$. Thus, every distributional  solution of 
\begin{equation}
(\dA^\dagger_{p_y,p_{\z},p_\eta} +i)\psi=0,
\end{equation}
is square integrable and in fact such solutions are smooth. Let us define 
\begin{equation}
\psi_{p_y,p_{\z},p_\eta}\in \LL,
\end{equation}
as the unique smooth function satisfying
\begin{enumerate}
\item $(\dA_{p_y,p_{\z},p_\eta}^\dagger +i)\psi_{p_y,p_{\z},p_\eta}=0$,
\item the value of the function at $0$ satisfies 
\begin{equation}
\psi_{p_y,p_{\z},p_\eta}(0)=1,
\end{equation}
\item the derivative at $0$ satisfies 
\begin{equation}
\psi_{p_y,p_{\z},p_\eta}'(0)=0.
\end{equation}
\end{enumerate}
This function is symmetric
\begin{equation}
\forall_{x\in\R}\,\psi_{p_y,p_{\z},p_\eta}(x)=\psi_{p_y,p_{\z},p_\eta}(-x)
\end{equation}
because of the symmetry $x\rightarrow -x$ of the operator $\dA_{p_y,p_{\z},p_\eta}$. It is also nontrivial.

\begin{lm}\label{lm:measure}
The function $\R^3\ni (p_y,p_{\z},p_\eta)\rightarrow\|\psi_{p_y,p_{\z},p_\eta}\|\in \R\cup\{\infty\}$ is measurable.  It is also finite and positive for $p_{\z}\not=0$.
\end{lm}

\begin{proof}
The norms 
\begin{equation}
\|\psi_{p_y,p_{\z},p_\eta}\|_{[-L,L]}^2=\int_{-L}^Ldx\, |\psi_{p_y,p_{\z},p_\eta}|^2
\end{equation}
depend continuously on $p_y,p_{\z},p_\eta$ because functions $\psi_{p_y,p_{\z},p_\eta}$ do. Let us notice that 
\begin{equation}
\|\psi_{p_y,p_{\z},p_\eta}\|=\sup_{L} \|\psi_{p_y,p_{\z},p_\eta}\|_{[-L,L]},
\end{equation}
thus it is measurable. It is also finite everywhere except for $p_{\z}=0$. Positivity follows from positivity
\end{proof}

In fact, using proof of Lemma \ref{lm-8} one can show that this function is smooth for $p_{\z}\not=0$, but we will not need this fact.
Let us define 
\begin{equation}
U_K=\{(p_y,p_{\z},p_\eta)\in[1,2]^3\colon \|\psi_{p_y,p_{\z},p_\eta}\|<K\}.
\end{equation}
It is a measurable set (by Lemma \ref{lm:measure}) and the standard Lebesgue measure $\mu(U_K)\leq 1$. We can choose $K>0$ such that the Lebesgue measure $\mu(U_K)>0$, thus
\begin{equation}
\psi(x,p_y,p_{\z},p_\eta)=1_{U_K}(p_y,p_{\z},p_\eta)\psi_{p_y,p_{\z},p_\eta}(x)\in \HF
\end{equation}
is non-trivial in $\HF$ ($\|\psi\|_{\wedge}\not=0$). Here $1_{U_K}$ denotes characteristic function of the set $U_K$.

We will now show, that $\psi$ is perpendicular to the space \eqref{eq:imd}. Let us suppose that $\phi\in \widehat{D}$ (the Fourier transformed domain of $\dA$ operator) then from Lemma \ref{lm:F}
\begin{enumerate}
\item $\phi\in C^\infty(\R^4)$,
\item there exists $L>0$ such that $\supp \phi\subset[-L,L]\times \R^3$.
\end{enumerate}
Applying Fubini theorem and integrating by parts in $x$ we get using Lemma \ref{lm:F} and properties of functions $\psi_{p_y,p_{\z},p_\eta}$
\begin{align}
&\left\langle\psi,\widehat{\dA}\phi\right\rangle_{\hspace{-2pt}\wedge}
=\int_{U_K} dp_ydp_{\z}dp_\eta\ \int_{-2L}^{2L} dx\ \overline{\psi_{p_y,p_{\z},p_\eta}}\dA_{p_y,p_{\z},p_\eta}\phi=\nonumber\\
&=\int_{U_K} dp_ydp_{\z}dp_\eta\ \int_{-2L}^{2L} dx\  \overline{\dA_{p_y,p_{\z},p_\eta}^\dagger\psi_{p_y,p_{\z},p_\eta}}\ \phi=\nonumber\\&=i\int_{U_K} dp_ydp_{\z}dp_\eta\ \int_{-2L}^{2L} dx\  \overline{\psi_{p_y,p_{\z},p_\eta}}\ \phi=i\langle \psi,\phi\rangle_{\hspace{-1pt}\wedge}.
\end{align}
This shows that $\psi$ is perpendicular to the space \eqref{eq:imd}, thus
\begin{equation}
\psi\in \ker (\widehat{\dA}^\dagger+ i)
\end{equation}
The operator $\dA$ is not essentially self-adjoint on the domain $C^\infty_0(M_V)$ and we proved theorem \ref{thm:2}.

\section{Summary}

A globally hyperbolic and geodesically complete metric does not necessary have essentially self-adjoint d'Alembert and Klein-Gordon operators, thus one should be careful with objects like $e^{it\dA}$ or $(\dA+m^2\pm i\epsilon)^{-1}$ even in physically reasonable spacetimes. It is an open problem if some additional assumptions like Einstein's equations with chosen energy condition or adding to the operator the non-minimal coupling term \cite{Birrell} (proportional to the Ricci scalar)  would change the result. 

\section*{Acknowledgements}
We would like to thank Jan Derezi{\'n}ski and Daniel Siemssen for useful discussions. We would also like to express gratitude for anonymous referees of the previous version of the manuscript for pointing out various shortcomings.


\begin{thebibliography}{10}

\bibitem{Strichartz}
R.~S. Strichartz, ``{A}nalysis of the {L}aplacian on the {C}omplete
  {R}iemannian {M}anifold,'' {\em Journal of Functional Analysis 52, 48-79},
  1983.

\bibitem{Global-lor}
J.~K. Beem, P.~E. Ehrlich, and K.~L. Easley, {\em Global lorentzian geometry}.
\newblock Marcel Dekker, Inc., New York and Basel, 1996.

\bibitem{Ellis-Hawking}
S.~W. Hawking and G.~F.~R. Ellis, {\em The large scale structure of
  space-time}.
\newblock Cambridge University Press, Cambridge Monographs on Mathematical
  Physics, 1973.

\bibitem{Beem-fin}
J.~K. Beem, ``Globally hyperbolic space-times which are timelike {C}auchy
  complete,'' {\em General Relativity and Gravitation vol. 7, issue 4,
  339-344}, 1976.

\bibitem{Derezinski2016}
J.~Dereziński and D.~Siemssen, ``{F}eynman {P}ropagators on {S}tatic
  {S}pacetimes,'' {\em Rev. Math. Phys. Vol 30 nr. 3}, 2018.

\bibitem{Vasy2017}
A.~Vasy, ``Essential self-adjointness of the wave operator and the limiting
  absorption principle on lorentzian scattering spaces,''

\bibitem{Birrell}
N.~Birrell and P.~C. Davies, {\em Quantum fields in curved space}.
\newblock Cambridge University Press, 1982.

\bibitem{Parker2009}
L.~E. Parker and D.~Toms, {\em {Q}uantum {F}ield {T}heory in {C}urved
  {S}pacetime. {Q}uantized {F}ields and {G}ravity}.
\newblock Cambridge Monographs on Mathematical Physics, Cambridge University
  Press, 2009.

\bibitem{Baer}
C.~B\"ar, N.~Ginoux, and F.~Pf\"affle, {\em {W}ave {E}quations on {L}orentzian
  {M}anifolds and {Q}uantization}.
\newblock ESI Lectures in Mathematics and Physics, 2000.

\bibitem{Reed-Simon-2}
M.~Reed and B.~Simon, {\em Methods of Modern Mathematical Physics Vol II:
  {F}ourier analysis, {S}elf-adjointness}.
\newblock Academic Press Inc., 1975.

\bibitem{Minguzzi2019b}
E.~Minguzzi, ``{L}orentzian causality theory,'' {\em Living Rev. Rel.},
  vol.~22, no.~1, p.~3, 2019.

\bibitem{Misner1973}
C.~W. Misner, K.~S. Thorne, and J.~A. Wheeler, {\em {Gravitation}}.
\newblock San Francisco: W. H. Freeman, 1973.

\bibitem{Fulling-geo}
S.~A. Fulling, ``{P}seudodifferential {O}perators, {C}ovariant {Q}uantization,
  the {I}nescapable {V}an {V}leck–{M}orette {D}eterminant, and the
  $\frac{R}{6}$ {C}ontroversy,'' {\em International Journal of Modern Physics
  D}, vol.~5, p.~597–608, 1996.

\bibitem{DSL-geo}
J.~Dereziński, A.~Latosiński, and D.~Siemssen, ``{P}seudodifferential {W}eyl
  {C}alculus on ({P}seudo-){R}iemannian {M}anifolds,'' {\em Ann. Henri
  Poincaré}, 2020.

\bibitem{Arnold}
V.~I. Arnold, {\em {M}athematical {M}ethods of {C}lassical {M}echanics}.
\newblock Springer-Verlag, second~ed., 1989.

\bibitem{Rudin-RC}
W.~Rudin, {\em {R}eal and {C}omplex {A}nalysis}.
\newblock McGraw-Hill, third~ed., 1986.

\bibitem{Bernal2003}
A.~N. Bernal and M.~Sanchez, ``{On Smooth Cauchy hypersurfaces and Geroch's
  splitting theorem},'' {\em Commun. Math. Phys.}, vol.~243, pp.~461--470,
  2003.

\bibitem{Wald}
R.~M. Wald, {\em {Q}uantum {F}ield {T}heory in {C}urved {S}pacetime and {B}lack
  {H}ole {T}hermodynamics}.
\newblock The University of Chicago Press Chicago and London.

\bibitem{Wintner}
A.~Wintner, ``On tbe {N}ormalization of {C}haracteristic {D}ifferentials in
  {C}ontinuous {S}pectra,'' {\em Physics Review 72, 516}, 1947.

\bibitem{Olver1997}
F.~Olver, {\em Asymptotics and special functions}.
\newblock CRC Press, 1997.

\end{thebibliography}

\end{document}